\DeclareMathOperator*{\argmin}{argmin}
\newtheorem{lmm}{Lemma}
\newtheorem{thm}{Theorem}
\newcommand{\be}{\begin{eqnarray}}
\newcommand{\ee}{\end{eqnarray}}
\newcommand{\ba}{\begin{eqnarray}}
\newcommand{\ea}{\end{eqnarray}}
\newcommand{\ban}{\begin{eqnarray*}}
\newcommand{\ean}{\end{eqnarray*}}
\newcommand{\ket}[1]{|#1\rangle}
\newcommand{\braket}[2]{\langle{#1}|{#2}\rangle}
\newcommand{\oo}{\otimes}
\begin{document}

\title{Knowledge by Direct Measurement versus Inference from Steering}

\author{Clive Aw}
\affiliation{Department of Physics, National University of Singapore, 2 Science Drive 3, Singapore 117542}
\author{Michele Dall'Arno}
\affiliation{Centre for Quantum Technologies, National University of Singapore, 3 Science Drive 2, Singapore 117543}
\author{Valerio Scarani}
\affiliation{Centre for Quantum Technologies, National University of Singapore, 3 Science Drive 2, Singapore 117543}
\affiliation{Department of Physics, National University of Singapore, 2 Science Drive 3, Singapore 117542}

\date{\today}

\begin{abstract}
If Alice and Bob start out with an entangled state $\ket{\Psi}_{AB}$, Bob may update his state to $\ket{\varphi}_B$ either by performing a suitable measurement himself, or by receiving the information that a measurement by Alice has steered that state. While Bob's update on his state is identical, his update on Alice's state differs: if Bob has performed the measurement, he has steered the state $\ket{\chi_{\leftarrow}(\varphi)}_A$ of Alice; if Alice has made the measurement, to steer $\ket{\varphi}_B$ on Bob she must have found a different state $\ket{\chi_{\rightarrow}(\varphi)}_A$. Based on this observation, a consequence of the well-known ``Hardy's ladder'', we show that information from direct measurement must trump inference from steering. The erroneous belief that both paths should lead to identical conclusions can be traced to the usual prejudice that measurements should reveal a pre-existing state of affairs. We also prove a technical result on Hardy's ladder: the minimum overlap between the steered and the steering state is $2\sqrt{p_{0}p_{n-1}}/(p_0+p_{n-1})$, where $p_0$ and $p_{n-1}$ are the smallest (non-zero) and the largest Schmidt coefficients of $\ket{\Psi}_{AB}$.
\end{abstract}
\maketitle

\section{Measurement and steering in a bipartite setting}

\subsection{A question}

A source produces the state $\ket{\Psi}_{AB}$ and sends one subsystem to Alice's lab, the other to Bob's lab. Bob leaves his assistants to take care of the task and goes to his office to perform some administrative duty. Shortly later, Bobby comes from the lab to inform Bob that, in some given rounds of the experiment, the state in their lab had been updated to $\ket{\varphi}_B$. Those particles, that have been kept in a quantum memory, are ready for use in subsequent tasks. Can Bob update also his knowledge of Alice's state in those same rounds? 

\subsection{Formalisation}

We write the initial state as
\ba
\ket{\Psi}_{AB}&=&\sum_k \sqrt{p_k}\,\ket{k}_A\otimes\ket{k}_B.
\ea and Bob's subsequent update as
\ba\ket{\varphi}_B&=&\sum_{k}\beta_k\ket{k}_B\,.\ea The probability $P_\beta=\sum_k p_k |\beta_k|^2$ of Bob finding this state is strictly positive, unless $\beta_k\neq 0$ only for indices $k$ such that $p_k=0$. 

We consider now two ways in which Bob's update may have come about. Suppose first that \textit{the measurement was done in Bob's lab}. It is a measurement in a basis that comprises $\ket{\varphi}$ and which, in that particular round, happened to yield that result. In this situation, Bob updates Alice's state to be the \textit{$\varphi-$steered state}
\ba
\ket{\chi_{\leftarrow}(\varphi)}_{A}&=&\sum_k \frac{\beta^*_k \sqrt{p_k}}{\sqrt{P_\beta}}\,\ket{k}_A\,.
\ea As second case, suppose that \textit{the measurement was done in Alice's lab} and Alice has informed Bobby that she has steered his state to $\ket{\varphi}_B$. This means that, in that round, Alice's measurement had yielded the \textit{$\varphi-$steering state}
\ba
\ket{\chi_{\rightarrow}(\varphi)}_{A}=\sum_k\alpha_k\ket{k}&\textrm{ with }& \frac{\alpha_k^*\sqrt{p_k}}{\sqrt{P_\alpha}}\,=\beta_k
\ea where $P_\alpha=\sum_k p_k|\alpha_k|^2$ is the probability of that outcome.

At this point, it is obvious for anyone familiar with Hardy's ladder \cite{hardy} that \textit{the two states $\ket{\chi_{\leftarrow}(\varphi)}_{A}$ and $\ket{\chi_{\rightarrow}(\varphi)}_{A}$ are generally different}. Indeed,
\ba\label{scalar}
\braket{\chi_{\leftarrow}(\varphi)}{\chi_{\rightarrow}(\varphi)}&=&\sqrt{\frac{P_\alpha}{P_\beta}}\,=\,\frac{\sum_k p_k |\alpha_k|^2 }{\left(\sum_k p_k^2 |\alpha_k|^2\right)^{1/2}}
\ea
is equal to 1 only in either of two cases: first, if all the $p_k$ are equal, i.e.~if $\ket{\Psi}_{AB}$ is maximally entangled; second, if $\alpha_{k}=\delta_{k,k'}$ for a given $k'$, which implies also $\beta_{k}=\delta_{k,k'}$, and means that the measurement (be it done by Alice or by Bob) is made in the Schmidt basis.

Also, the $\varphi$-steered and the $\varphi$-steering states are never orthogonal for a given $\ket{\Psi}_{AB}$. To the best of our knowledge, the minimum of the scalar product \eqref{scalar} was never reported for states of arbitrary dimensions. Using techniques from convex fractional optimisation (Appendix \ref{app}), we find
\ba
\min_{\varphi}\braket{\chi_{\leftarrow}(\varphi)}{\chi_{\rightarrow}(\varphi)}&=&\frac{2\sqrt{p_0p_{n-1}}}{p_0+p_{n-1}}
\ea where $p_0$ and $p_{n-1}$ are, respectively, the smallest and the largest non-zero Schmidt coefficients of $\ket{\Psi}_{AB}$. If $p_0$ and $p_{n-1}$ are not degenerate, the minimum is achieved for
\ba
\ket{\varphi}_B&=&\frac{1}{\sqrt{2}}\left(\ket{0}+e^{i\lambda}\ket{n-1}\right)
\ea
with arbitrary $\lambda\in\mathbb{R}$. This state steers
\ban
\ket{\chi_{\leftarrow}(\varphi)}_A&=&\sqrt{\frac{p_0}{p_0+p_{n-1}}}\ket{0}+e^{-i\lambda}\sqrt{\frac{p_{n-1}}{p_0+p_{n-1}}}\ket{n-1}\ean and is steered by
\ban
\ket{\chi_{\rightarrow}(\varphi)}_A&=&\sqrt{\frac{p_{n-1}}{p_0+p_{n-1}}}\ket{0}+e^{-i\lambda}\sqrt{\frac{p_{0}}{p_0+p_{n-1}}}\ket{n-1}\,.
\ean 

\section{Measurement trumps steering}

\subsection{Narrative of the previous observation}

Bob knows that, in the $n$ rounds under consideration, the state in his lab was $\ket{\varphi}_B$. Because of no-signaling, he can't know what Alice has done, if anything. At this stage, he can update his knowledge of Alice's state to the steered state $\ket{\chi_{\leftarrow}(\varphi)}_A$ in the sense that whatever Alice reports later won't be of zero probability given that state, the fact that \eqref{scalar} is never zero being a special instance of this. However, if later Alice informs Bob that the measurement was done in her lab, and that Bobby's knowledge came from Alice updating him \footnote{We note here that Bob can catch ``lazy Bobby'' by asking him to report the updated states in \textit{all} rounds, not just postselecting the rounds where a given state was found. Indeed, if Alice is steering and Bobby simply reporting, these states won't be orthogonal.}, then Bob must update Alice's state to the steering state $\ket{\chi_{\rightarrow}(\varphi)}_A$. This update will lead to more accurate predictions.

For a classical mind, there is something troubling in what we have written. After all, we have allowed Alice to perform one out of only two operations: either do nothing, or measure and tell Bob what is the state she steered on his side. That these two operations don't lead to the same state of knowledge means that \textit{Alice's measurement creates a state of affairs that could not have been known to Bob in advance} (and that's why he has to update again his knowledge). This is certainly counter-intuitive but not exactly new: it has been a tenet of quantum theory since the early days and was conclusively demonstrated by Bell's theorem.

\subsection{A variation, the same message}

The point may be reinforced by considering another set of rounds and possibly other measurements. We look at what happens when Bob's information comes from both sources: $\ket{\varphi}_B$ was found as a result of measurements in his lab, \textit{and} he is informed by Alice that she has steered his state to a possibly different state $\ket{\varphi'}_B$. Bob's should then \textit{update his knowledge to the outcomes of the two measurements}, namely $\ket{\varphi}_B$ for himself and $\ket{\chi_{\rightarrow}(\varphi')}_{A}$ for Alice. To see it, consider one possible chronology (thanks to no-signaling, timing does not matter). Bob measures first and gets $\ket{\varphi}_B$: he infers that he must have steered $\ket{\chi_{\leftarrow}(\varphi)}_A$ on Alice's side. Alice's message later informs him that she has made her own measurement and found $\ket{\chi_{\rightarrow}(\varphi')}_{A}$. The only states $\ket{\varphi'}_B$ for which the story is impossible on Alice's side are those such that $\braket{\chi_{\leftarrow}(\varphi)}{\chi_{\rightarrow}(\varphi')}= 0$. But $\braket{\chi_{\leftarrow}(\varphi)}{\chi_{\rightarrow}(\varphi')}=\frac{P'_{\alpha}}{P_{\beta}}\sum_{k}\beta_k^*\beta'_k$: the impossible states are those that are orthogonal to $\ket{\varphi}_B$, i.e.~those for which the story is impossible on Bob's side too.

\subsection{What happens if steering trumps measurement}

Failing to give direct measurement priority over inferences from steering may lead to absurd situations, the following \textit{\'echange de politesses} being an extreme one. Bob measures his system and finds $\ket{\psi_0}=\sum_k b_k\ket{k}$. He then informs Alice that he has steered her state to  $\ket{\psi_1}\propto\sum_k c_kb_k\ket{k}$, where for simplicity of notation we denote $c_k\equiv\sqrt{p_k}$, assume $a_k\in\mathbb{R}$, and omit normalisation. So far so good; but now Alice replies back as if \textit{she} had performed the measurement: if she has the state $\ket{\psi_1}$, then by steering Bob must have $\ket{\psi_2}\propto\sum_k c_k^2b_k\ket{k}$. Bob accepts Alice's inference on his system, then believes that he has done the measurement and informs Alice that her state must be $\ket{\psi_3}\propto\sum_k c_k^3b_k\ket{k}$; and so on. As soon as $b_k\neq\delta_{k,k'}$, the iteration's convergence is dominated by the largest Schmidt coefficient $p_{max}$ of $\ket{\Psi}_{AB}$: both the even (Bob's) and the odd (Alice's) sequence converge to $\ket{\psi_\infty}\propto \sum'_{k}b_k\ket{k}$ where the sum is on the indices $k$ such that $p_k=p_{max}$. The fact that Alice and Bob converge to an agreement may be desirable for peace but not for knowledge, since $\ket{\psi_\infty}$ has nothing to do with what they actually had in their labs (unless the initial state is maximally entangled, in which case all the $\ket{\psi_m}$ are equal).


\subsection{Relation to Frauchiger \& Renner's Thought Experiment}

The tension between updates from steering and updates from measurements may be detected in the argument put forward by Frauchiger and Renner (FR) \cite{FRPaper}, which is indeed a discussion of knowledge and certainty regarding measurements via the inference of various agents about each other's states. In this discussion, we mention some notations of that paper without explaining all of them.

In the FR thought experiment, the structure of the quantum state is \ban
\ket{\Psi}_{AB} &=& \sqrt{\frac{1}{3}}\Big(\ket{0}_A\ket{0}_B+\ket{0}_A\ket{1}_B+\ket{1}_A\ket{0}_B\Big)
\ean where $\ket{0}_A=\ket{\text{tails}}_R\oo\ket{\bar{t}}_{\bar{L}}$, $\ket{1}_A=\ket{\text{heads}}_R\oo\ket{\bar{h}}_{\bar{L}}$, $\ket{0}_B=\ket{\uparrow}_S\oo\displaystyle\ket{+{\sfrac{1}{2}}}_L$ and $\ket{1}_B=\ket{\downarrow}_S\oo\displaystyle\ket{-{\sfrac{1}{2}}}_L$.

The instances of steering and inference in question involve the time steps denoted $n:11 \to n:01$ and $n:01 \to n:31$ in Table 3 of the paper. Bob's measurement updated his state to $\ket{1}_B$: through steering, he infers that Alice's state is $\ket{0}_A$. When Alice is informed of this, by steering she would infer that Bob holds $\sqrt{\sfrac{1}{2}}(\ket{0}_B+\ket{1}_B)$. Suppose she communicates her inference to Bob, and Bob \textit{buys} this update rather than keeping the knowledge coming from his own measurement. Then, the paper's reasoning regarding the other two agents follow, namely: if $\bar w = \text{ok}$ then it is certain that $w = \text{fail}$; which translates to $P(\bar w = \bar{\text{ok}}, w = \text{ok}) = 0$, against the quantum prediction $P(\bar{w}= \bar{\text{ok}}, w = \text{ok})=\sfrac{1}{12}$. In other words, the FR argument exploits the minimal version of the \textit{\'echange de politesse} discussed before, where the dialogue stops at $\ket{\psi_2}$.

Also, as we argued above, the equivalence between updating from measurement and updating from steering could be assumed if measurement just reveals a pre-existing state of affairs. Thus, the similarity between the FR argument and Hardy's paradox in nonlocality may not be a mere mathematical incident, but likely stems from the same prejudice \cite{aaronson}.

\section{Conclusions}

We have shown that information from measurement must trump that from steering when updating an agent's knowledge on another agent's state. If this rule is not followed, paradoxical situation may appear. The fact that this rule is not trivial originates from the same prejudice that leads to formulating the local hidden variable assumption, namely, that measurements should just reveal a pre-existing state of affairs.

\section*{acknowledgments}

We acknowledge discussions with Koon Tong Goh. This research is supported by the National Research Fund and the Ministry of Education, Singapore, under the Research Centres of Excellence programme.


\begin{appendix}

\section{Minimisation of the scalar product \eqref{scalar}}\label{app}

In this appendix, for simplicity of notation we assume $\alpha_k\in\mathbb{R}$, without loss of generality. Also, the notation $x^*$ indicates the solution of an optimisation, rather than complex conjugation as in the main text.

\begin{thm}
  Let  $\mathbf{p}   \in  \mathbb{R}^n$  be   a  probability
  distribution such that $0 <  p_k \le p_{k+1}$ for any $k$.
  The following non-convex fractional optimization problem:
  \begin{align}
    \label{eq:fractional}
    \boldsymbol{\alpha}^*                                 :=
    \argmin_{\substack{\boldsymbol{\alpha}               \in
        \mathbb{R}^n\\\sum_k  \alpha_k^2 =  1}} \frac{\sum_k
      p_k \alpha_k^2} {\sqrt{\sum_k p_k^2 \alpha_k^2}},
  \end{align}
  is solved by any $\boldsymbol{\alpha}^*$ such that
  \begin{align*}
    \sum_{k    \in     \mathcal{K}_{\textrm{min}}}    \left(
    \alpha_k^*   \right)^2   &    =   \frac{p_{n-1}}{p_0   +
      p_{n-1}},\\  \sum_{k  \in  \mathcal{K}_{\textrm{max}}}
    \left(  \alpha_k^*   \right)^2  &  =   \frac{p_0}{p_0  +
      p_{n-1}},
  \end{align*}
  where $\mathcal{K}_{\textrm{min}}$           and
  $\mathcal{K}_{\textrm{max}}$ are the sets of indexes $k$'s
  such   that
  \begin{align*}
    p_k    <     p_j,    \qquad    &    \forall     k    \in
    \mathcal{K}_{\textrm{min}},   \;   \forall   j   \not\in
    \mathcal{K}_{\textrm{min}},\\  p_k   >  p_j,   \qquad  &
    \forall k  \in \mathcal{K}_{\textrm{max}}, \;  \forall j
    \not\in \mathcal{K}_{\textrm{max}},
  \end{align*}
  and $\alpha_k^* = 0$ for any other $k$.  In particular, if
  $|         \mathcal{K}_{\textrm{min}}|         =         |
  \mathcal{K}_{\textrm{max}}| = 1$, one has
  \begin{align*}
    \alpha_0^*    &    =    \pm\sqrt{\frac{p_{n-1}}{p_0    +
        p_{n-1}}},\\        \alpha_{n-1}^*        &        =
    \pm\sqrt{\frac{p_0}{p_0 + p_{n-1}}}.
  \end{align*}
  The figure of merit in Eq.~\eqref{eq:fractional} evaluates
  to
  \begin{align*}
    \frac{\sum_k    p_k    \left(   \alpha_k^*    \right)^2}
         {\sqrt{\sum_k p_k^2 \left(  \alpha_k^* \right)^2 }}
         = 2\frac{\sqrt{p_0 p_{n-1}}}{p_0 + p_{n-1}}.
  \end{align*}
\end{thm}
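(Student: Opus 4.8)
The plan is to strip away both the phases and the square root in two stages. Since the objective in Eq.~\eqref{eq:fractional} depends on $\boldsymbol{\alpha}$ only through the nonnegative quantities $x_k := \alpha_k^2$, I would first recast it as the minimisation of
\[
  F(\mathbf{x}) = \frac{\sum_k p_k x_k}{\sqrt{\sum_k p_k^2 x_k}}
\]
over the probability simplex $\{x_k \ge 0,\ \sum_k x_k = 1\}$. Reading $\mathbf{x}$ as a probability distribution over the values $\{p_k\}$, I would set $\mu = \sum_k p_k x_k$ for its mean and $s = \sum_k p_k^2 x_k$ for its second moment, so that $F = \mu/\sqrt{s}$ and hence $F^2 = \mu^2/s = (1 + v/\mu^2)^{-1}$, where $v = s - \mu^2 \ge 0$ is the variance. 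Minimising $F$ is then equivalent to maximising the scale-free ratio $v/\mu^2$, which already exposes why the answer should depend only on the spread of the $p_k$.

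The second stage splits this into an inner and an outer optimisation. For the inner problem I would fix the mean $\mu \in [p_0, p_{n-1}]$ and maximise the variance $v$ over all distributions supported in $[p_0, p_{n-1}]$ with that mean. This is a classical moment problem whose maximiser is the two-point distribution on the extreme values $p_0$ and $p_{n-1}$, giving the closed form $v_{\max}(\mu) = (p_{n-1} - \mu)(\mu - p_0)$. The same conclusion can be reached by Lagrange/KKT stationarity: equating $\partial F/\partial x_j$ across the support yields a condition that is quadratic in $p_j$, so at most two distinct values $p_j$ can carry positive weight, and the ordering $p_0 \le \dots \le p_{n-1}$ together with a boundary check forces those two values to be the smallest and the largest. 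I expect this to be the main obstacle, since it is precisely where the ``only the extremes matter'' phenomenon must be established rigorously: the indices with $p_k$ strictly between $p_0$ and $p_{n-1}$ have to be shown to receive zero weight, while any degeneracy at the two extreme levels (the sets $\mathcal{K}_{\textrm{min}}$ and $\mathcal{K}_{\textrm{max}}$) is irrelevant, which explains why the weight may be shared arbitrarily within each level.

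For the outer problem I would then maximise $h(\mu) = v_{\max}(\mu)/\mu^2 = -1 + (p_0 + p_{n-1})/\mu - p_0 p_{n-1}/\mu^2$ over $\mu \in [p_0, p_{n-1}]$. Substituting $u = 1/\mu$ turns $h$ into a concave quadratic in $u$ whose unique maximiser is $\mu^* = 2 p_0 p_{n-1}/(p_0 + p_{n-1})$, the harmonic mean of the two extremes (which lies in the admissible interval). Reading off the two-point weights that realise this mean gives $\sum_{k \in \mathcal{K}_{\textrm{min}}} (\alpha_k^*)^2 = p_{n-1}/(p_0 + p_{n-1})$ and $\sum_{k \in \mathcal{K}_{\textrm{max}}} (\alpha_k^*)^2 = p_0/(p_0 + p_{n-1})$, and substituting $\mu^*$ back into $F$ yields the claimed value $2\sqrt{p_0 p_{n-1}}/(p_0 + p_{n-1})$. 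The non-degenerate case $|\mathcal{K}_{\textrm{min}}| = |\mathcal{K}_{\textrm{max}}| = 1$ then fixes $\alpha_0^*$ and $\alpha_{n-1}^*$ up to sign, completing the argument.
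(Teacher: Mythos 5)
Your proof is correct, but it takes a genuinely different route from the paper's. The paper stays inside fractional-programming machinery: after the same substitution $a_k=\alpha_k^2$, it squares the objective so the denominator becomes affine, applies a Schaible-type change of variables to obtain a problem \emph{linear} in the weights, then uses Lagrange multipliers on the faces of the constraint polytope to show that any extremum has at most two non-zero weights, and finally minimizes the two-index expression $2\sqrt{p_{k_0}p_{k_1}}/(p_{k_0}+p_{k_1})$ by proving it is increasing in the ratio $r=p_{k_0}/p_{k_1}$, so the extreme pair $(k_0,k_1)=(0,n-1)$ wins. You instead reinterpret the weights as a probability distribution over the values $p_k$ and reduce the problem to maximizing the squared coefficient of variation $v/\mu^2$: the inner step is the classical maximum-variance (Bhatia--Davis) bound $v\le(p_{n-1}-\mu)(\mu-p_0)$, with equality exactly for two-point distributions on the extreme values (these are feasible in the discrete problem, so relaxing the support to the interval $[p_0,p_{n-1}]$ costs nothing), and the outer step is a concave quadratic in $u=1/\mu$ maximized at the harmonic mean $\mu^*=2p_0p_{n-1}/(p_0+p_{n-1})$. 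Your numbers check out: the two-point weights realizing $\mu^*$ are $p_{n-1}/(p_0+p_{n-1})$ on the value $p_0$ and $p_0/(p_0+p_{n-1})$ on $p_{n-1}$, reproducing the stated optimizers---with the degeneracy freedom inside $\mathcal{K}_{\textrm{min}}$ and $\mathcal{K}_{\textrm{max}}$ handled automatically, since only the induced distribution on values matters---and substituting back gives $2\sqrt{p_0p_{n-1}}/(p_0+p_{n-1})$ as claimed. What your route buys is elementarity and insight: it needs no external fractional-programming results, the ``only the extremes matter'' phenomenon drops out of a one-line variance inequality, and the optimal mean acquires a name. What the paper's route buys is systematics: its transformation chain and the Lagrange-multiplier argument for at most two support points (which you mention only as an aside) apply to fractional objectives that admit no probabilistic reading.
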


\begin{proof}
  First, notice  that for  any set $\mathcal{K}$  of indexes
  $k$'s  such   that  $p_k  =   p_j$  for  any  $k,   j  \in
  \mathcal{K}$,  by  direct  computation one  has  that  the
  figure of merit in Eq.~\eqref{eq:fractional} evaluates the
  same      for      any      $\boldsymbol{\alpha}$      and
  $\boldsymbol{\beta}$  such that  $\alpha_k =  \beta_k$ for
  any $k \not\in \mathcal{K}$, and
  \begin{align*}
    \sum_{k  \in \mathcal{K}}  \left(  \alpha_k \right)^2  =
    \sum_{k \in \mathcal{K}} \left( \beta_k \right)^2.
  \end{align*}
  Hence,  without  loss  of  generality, we  assume  $p_k  <
  p_{k+1}$ for any $k$.

  The remaining  of the proof  is lengthy and will  be split
  into the following three lemmas.
\end{proof}

\begin{lmm}
  Let  $\mathbf{p}   \in  \mathbb{R}^n$  be   a  probability
  distribution such  that $0 <  p_k < p_{k+1}$ for  any $k$.
  The optimization  problem in  Eq.~\eqref{eq:fractional} is
  equivalent to  the following optimization  problem, linear
  in $\mathbf{a}$:
  \begin{align}
    \label{eq:linear_in_a}
    \left(      \mathbf{a}^*,       s^*      \right)      :=
    \argmin_{\substack{\mathbf{a} \ge 0, s > 0\\\sum_k a_k =
        1\\ s^2 \sum_k p_k^2 a_k = 1}} s \sum_k p_k a_k,
  \end{align}
  where $\alpha_k = \pm \sqrt{a_k}$ for any $k$.
\end{lmm}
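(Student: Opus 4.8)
The plan is to reduce the optimisation to the claimed form by a single change of variables followed by the introduction of one auxiliary scaling variable. First I would set $a_k := \alpha_k^2$ for every $k$. Since $\alpha_k \in \mathbb{R}$, the map $\boldsymbol{\alpha} \mapsto \mathbf{a}$ is a surjection from the unit sphere $\{\sum_k \alpha_k^2 = 1\}$ onto the probability simplex $\{\mathbf{a} \ge 0,\ \sum_k a_k = 1\}$, and every $\mathbf{a}$ in the simplex is the image of the $2^n$ sign choices $\alpha_k = \pm\sqrt{a_k}$. Crucially, the figure of merit in Eq.~\eqref{eq:fractional} depends on $\boldsymbol{\alpha}$ only through the squares $\alpha_k^2$, so it descends to a well-defined function of $\mathbf{a}$ on the simplex, namely $(\sum_k p_k a_k)/\sqrt{\sum_k p_k^2 a_k}$. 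I would note that its denominator is strictly positive on the feasible set, because $\sum_k a_k = 1$ forces some $a_k > 0$ while all $p_k > 0$. Hence minimising over $\boldsymbol{\alpha}$ on the sphere is the same problem as minimising this fractional objective over $\mathbf{a}$ on the simplex, with optimisers related by $\alpha_k = \pm\sqrt{a_k}$.

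Second, I would linearise the fraction by introducing $s > 0$. The point is that the constraint $s^2 \sum_k p_k^2 a_k = 1$, together with $s > 0$ and the positivity of the denominator just established, pins $s$ to the unique value $s = (\sum_k p_k^2 a_k)^{-1/2}$. Substituting this value, the objective $s \sum_k p_k a_k$ collapses exactly onto the fractional objective $(\sum_k p_k a_k)/\sqrt{\sum_k p_k^2 a_k}$. Therefore, eliminating $s$ from problem \eqref{eq:linear_in_a} returns precisely the simplex problem of the previous step, so the two have the same optimal value and their optimisers are in correspondence.

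I expect the only genuinely delicate points to be bookkeeping rather than conceptual. One is confirming that no feasible point is lost or spuriously added by the reparametrisation: every sphere point gives a simplex point and conversely, and the region $\{\mathbf{a} \ge 0,\ s > 0,\ \sum_k a_k = 1,\ s^2 \sum_k p_k^2 a_k = 1\}$ is in bijection, in the variable $\mathbf{a}$ after eliminating $s$, with the simplex. The other is to state the sense in which the problems are \emph{equivalent}: I would make explicit that the optimal values coincide and that any minimiser $(\mathbf{a}^*, s^*)$ of \eqref{eq:linear_in_a} yields the minimisers $\alpha_k^* = \pm\sqrt{a_k^*}$ of \eqref{eq:fractional}. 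This is exactly what the subsequent lemmas will exploit, since for fixed $s$ the objective is linear in $\mathbf{a}$ over the polytope cut out by $\sum_k a_k = 1$ and $\sum_k p_k^2 a_k = s^{-2}$, so its minimum is attained at a vertex.
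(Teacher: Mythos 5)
Your proof is correct, but it takes a genuinely different and more elementary route than the paper. The paper treats Eq.~\eqref{eq:fractional} as an instance of fractional programming and proceeds through a chain of cited transformations: after the same substitution $a_k:=\alpha_k^2$, it squares the objective to make the denominator affine, then invokes the Charnes--Cooper-type result of Schaible (variables $b_k = t\,a_k$, $t>0$) to obtain a convex program, takes a square root to linearise, and finally sets $s:=\sqrt{t}$ to arrive at Eq.~\eqref{eq:linear_in_a}. You shortcut all of this by observing that the constraint $s^2\sum_k p_k^2 a_k = 1$, together with $s>0$ and the strict positivity of $\sum_k p_k^2 a_k$ on the simplex, \emph{uniquely determines} $s = \left(\sum_k p_k^2 a_k\right)^{-1/2}$, so that eliminating $s$ collapses the objective $s\sum_k p_k a_k$ pointwise onto the fractional objective; the equivalence is then a bijection of feasible sets preserving objective values, with no appeal to fractional-programming theory. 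Your argument is self-contained and shorter, and it loses nothing needed downstream, since the next lemma in the paper only uses the linear structure of the objective in $\mathbf{a}$ over the polytope (exactly the observation you close with). What the paper's route buys is context rather than necessity: it exhibits the intermediate problems as convex fractional programs with affine denominator, which connects the result to a standard framework and to the cited literature, but for establishing this particular lemma your direct elimination of $s$ is fully rigorous and arguably cleaner.
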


\begin{proof}
  Equation~\eqref{eq:fractional}    is   an    instance   of
  fractional programming~\cite{Sch74, Sch83}.  The numerator
  and  the denominator  of the  figure of  merit are  convex
  functions,  however the  constraint is  not a  convex set.
  Hence,  Eq.~\eqref{eq:fractional} is  not  an instance  of
  \textit{convex}    fractional    programming.     However,
  Eq.~\eqref{eq:fractional}  can  be   recast  as  a  convex
  fractional   programming  by   means   of  the   following
  substitution.  By setting $a_k  := \alpha_k^2$ for any $k$
  one has  that $\alpha_k^* = \pm\sqrt{a_k^*}$  for any $k$,
  where
  \begin{align}
    \label{eq:convex_fractional}
    \mathbf{a}^*   :=    \argmin_{\substack{\mathbf{a}   \ge
        0\\\sum_k   a_k   =   1}}  \frac{\sum_k   p_k   a_k}
           {\sqrt{\sum_k p_k^2 a_k}}.
  \end{align}
  Equation~\eqref{eq:convex_fractional}  is now  an instance
  of  convex fractional  programming.   A  subset of  convex
  fractional programming for which special results hold (see
  Case  1 at  the  end  of page  3  of Ref.~\cite{Sch74}  or
  Proposition 7  of Ref.~\cite{Sch83}) is the  case in which
  the denominator of the figure of merit is affine, which is
  not the case in Eq.~\eqref{eq:convex_fractional}. However,
  this  can be  amended  by  another simple  transformation.
  Since the figure of merit is non-negative on the domain of
  optimization, taking its square one has
  \begin{align}
    \label{eq:convex_fractional_with_linear_denominator}
    \mathbf{a}^*   :=    \argmin_{\substack{\mathbf{a}   \ge
        0\\\sum_k  a_k =  1}}  \frac{\left(  \sum_k p_k  a_k
      \right)^2} {\sum_k p_k^2 a_k}.
  \end{align}
  Notice  that  the numerator  and  the  denominator of  the
  figure  of  merit are  a  convex  and a  linear  function,
  respectively, and  that the optimization is  over a convex
  set.                                                Hence,
  Eq.~\eqref{eq:convex_fractional_with_linear_denominator}
  is  an  instance  of convex  fractional  programming  with
  affine denominator.  It  was shown (see Case 1  at the end
  of  page  3  of  Ref.~\cite{Sch74}  or  Proposition  7  of
  Ref.~\cite{Sch83}) that by setting $a_k = b_k / t$ for any
  $k$ one has $a_k^* = b_k^* / t^*$ for any $k$, where
  \begin{align}
    \label{eq:convex}
    \left(      \mathbf{b}^*,       t^*      \right)      :=
    \argmin_{\substack{\mathbf{b}  \ge 0,  \; t  > 0\\\sum_k
        \frac{b_k}t = 1\\ t \sum_k p_k^2 \frac{b_k}t = 1}} t
    \left( \sum_k p_k \frac{b_k}t \right)^2.
  \end{align}
  Notice    that    Eq.~\eqref{eq:convex}   represents    an
  optimization problem convex in variable $\mathbf{b}$.  Yet
  another        simple        transformation        recasts
  Eq.~\eqref{eq:convex} as an optimization problem linear in
  $\mathbf{b}$.  Since  the figure of merit  is non-negative
  on the domain of optimization,  taking its square root one
  has
  \begin{align}
    \label{eq:linear_in_b}
    \left(      \mathbf{b}^*,       t^*      \right)      :=
    \argmin_{\substack{\mathbf{b}  \ge 0,  \; t  > 0\\\sum_k
        \frac{b_k}t =  1\\ t \sum_k p_k^2  \frac{b_k}t = 1}}
    \sqrt{t} \sum_k p_k \frac{b_k}t.
  \end{align}
  Finally,  to  recast   Eq.~\eqref{eq:linear_in_b}  as  the
  optimization problem in Eq.~\eqref{eq:linear_in_a}, we set
  $s :=  \sqrt{t}$ and substite  back $a_k = b_k/t$  for any
  $k$.
\end{proof}

\begin{lmm}
  The optimization problem  in Eq.~\eqref{eq:linear_in_a} is
  equivalent to the following scalar optimization problem:
  \begin{align}
    \label{eq:scalar0}
    \left(    k_0^*,   k_1^*,    s^*,    a^*   \right)    :=
    \argmin_{\substack{k_0,  \; k_1,  \;  s >  0,  \; a  \ge
        0\\s^2\left(     p_{k_0}^2     a     +     p_{k_1}^2
        \left(1-a\right) \right) = 1}}  s \left( p_{k_0} a +
    p_{k_1} (1-a) \right).
  \end{align}
  where $a^*_{k_0^*}  = a^*$ and $a^*_{k_1^*}  = 1-a^*$, and
  $a_k^* = 0$ for any $k \not= k_0^*, k_1^*$.
\end{lmm}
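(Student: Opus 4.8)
The plan is to exploit that, although the full problem in Eq.~\eqref{eq:linear_in_a} couples $\mathbf{a}$ and $s$, for any \emph{fixed} value of $s$ it degenerates into a linear program in $\mathbf{a}$, whose optimum is attained at a vertex of a polytope having at most two nonzero coordinates. Since the scalar problem in Eq.~\eqref{eq:scalar0} is precisely the restriction of Eq.~\eqref{eq:linear_in_a} to vectors $\mathbf{a}$ supported on (at most) two indices $k_0, k_1$, its feasible set is contained in that of the full problem, so the scalar minimum can only be larger. Establishing equality therefore amounts to showing that the full minimum is already achieved by some two-sparse $\mathbf{a}$.

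First I would argue existence of an optimizer. The constraints $\sum_k a_k = 1$ and $a_k \ge 0$ confine $\mathbf{a}$ to the compact probability simplex, on which $\sum_k p_k^2 a_k \in [p_0^2, p_{n-1}^2]$ stays bounded away from $0$; the normalization $s^2 \sum_k p_k^2 a_k = 1$ then forces $s = (\sum_k p_k^2 a_k)^{-1/2}$ into a compact interval, so the feasible set is compact and the continuous objective attains a minimum at some $(\mathbf{a}^*, s^*)$.

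Next I would decouple $s$ from $\mathbf{a}$. Setting $c := \sum_k p_k^2 a_k^* = 1/(s^*)^2$, I restrict attention to the level set of feasible $\mathbf{a}$ with $\sum_k p_k^2 a_k = c$; on it the induced $s$ equals $s^*$ identically, so the objective collapses to the \emph{linear} functional $s^* \sum_k p_k a_k$. Hence $\mathbf{a}^*$ must solve the linear program of minimizing $\sum_k p_k a_k$ over the polytope $\{\mathbf{a}\ge 0,\ \sum_k a_k = 1,\ \sum_k p_k^2 a_k = c\}$, since any $\mathbf{a}$ on this level set with a smaller value would be feasible for the full problem with the same $s^*$ and a strictly smaller objective, contradicting optimality.

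The key step is then the vertex count. This polytope is bounded, so the linear objective attains its minimum at an extreme point; a vertex in $\mathbb{R}^n$ requires $n$ linearly independent active constraints, and since the two equalities are always active, at least $n-2$ of the bounds $a_k \ge 0$ must be tight. Thus an optimal vertex $\tilde{\mathbf{a}}$ has at most two nonzero entries, say at $k_0$ and $k_1$; writing $a_{k_0} = a$ and $a_{k_1} = 1-a$ (forced by $\sum_k a_k = 1$) and keeping $s = s^*$ reproduces exactly Eq.~\eqref{eq:scalar0}. As $\tilde{\mathbf{a}}$ is feasible for, and attains the minimum of, the full problem, the two minima coincide, and optimizing over the index pair $(k_0, k_1)$ completes the equivalence. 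I expect the main obstacle to be precisely this decoupling: one cannot treat Eq.~\eqref{eq:linear_in_a} as a single linear program because $s$ is tied to $\mathbf{a}$, and the argument hinges on freezing $s$ at $s^*$ and passing to the level set where the objective becomes genuinely linear.
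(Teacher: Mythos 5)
Your proof is correct, and its skeleton matches the paper's: both arguments reduce Eq.~\eqref{eq:linear_in_a} to a linear program by freezing $s$ at its optimal value $s^*$. The paper does this by substituting $q_k := s^* p_k$, which lands on exactly your polytope $\{\mathbf{a} \ge 0,\ \sum_k a_k = 1,\ \sum_k p_k^2 a_k = 1/(s^*)^2\}$ with the same linear objective. Where you genuinely diverge is the key step of establishing two-sparsity of an optimizer. The paper uses Lagrange multipliers: on a putative support $\mathcal{K}$, stationarity forces $q_k + \lambda + \mu q_k^2 = 0$ for every $k \in \mathcal{K}$, and since the $q_k$ are pairwise distinct and a quadratic in $q$ has at most two roots, $|\mathcal{K}| \le 2$. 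You instead invoke the fundamental theorem of linear programming: the minimum of a linear functional over a nonempty bounded polytope is attained at a vertex, and a vertex of a polytope carved out of $\mathbb{R}^n$ by only two equality constraints and nonnegativity bounds must have at least $n-2$ coordinates equal to zero. Your route buys two things the paper's does not. First, it nowhere uses distinctness of the $p_k$'s, whereas the paper's root-counting argument needs all $q_k$ different (this is precisely why the Theorem's proof first reduces to the case $p_k < p_{k+1}$); your lemma would hold verbatim without that reduction. Second, your explicit compactness argument for the existence of $(\mathbf{a}^*, s^*)$ closes a point the paper leaves implicit: the substitution $q_k := s^* p_k$ presupposes that an optimal $s^*$ exists at all. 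Conversely, the paper's Lagrange bookkeeping over all supports $\mathcal{K} \subseteq [0, n-1]$ catalogs every candidate local extremum, slightly more information than the equivalence requires, but nothing in the final statement depends on it.
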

  
\begin{proof}
  Since $s$ and  $p_k$'s appear in terms of  the same degree
  in  the  figure  of  merit   and  in  the  constraints  in
  Eq.~\eqref{eq:linear_in_a},  by setting  $q_k :=  s^* p_k$
  for any $k$, where $s^* = \sqrt{t^*}$, one has
  \begin{align}
    \label{eq:linear}
    \mathbf{a}^*   :=    \argmin_{\substack{\mathbf{a}   \ge
        0\\\sum_k a_k  = 1\\ \sum_k  q_k^2 a_k =  1}} \sum_k
    q_k a_k,
  \end{align}
  which is a linear optimization problem whose constraint is
  a polytope  given by  the intersection of  the probability
  simplex with the hyperplane $\sum_k q_k^2 a_k = 1$.

  The (possibly  local) extrema  of the  figure of  merit in
  Eq.~\eqref{eq:linear} under its  constraints are either in
  the bulk of  such a polytope or on its  boundary, that is,
  when at least one of the elements of $\mathbf{a}$ is zero.
  In the  former case, extrema  can be found by  looking for
  extrema when the inequality  constraint $\mathbf{a} \ge 0$
  is  relaxed,  and  selecting  those that  lie  inside  the
  polytope.  In  the latter  case, extrema  can be  found by
  setting  one   element  of   $\mathbf{a}$  to   zero,  and
  proceeding   as   in   the  previous   case.    Proceeding
  recursively,  one  needs  to  look for  extrema  when  any
  possible subset of the elements of $\mathbf{a}$ are set to
  zero.  In the following, we  use the technique of Lagrange
  multipliers to show  that any such an extrema  has at most
  two non-null elements.

  By introducing  Lagrange multipliers $\lambda$  and $\mu$,
  for  any  set  $\mathcal{K}$  subset of  the  set  of  all
  possible indexes $k$'s, that is $\mathcal{K} \subseteq [0,
    n-1]$, one can write the following auxiliary function
  \begin{align*}
    \mathcal{L}  \left( \mathcal{K}  \right) :=  \sum_{k \in
      \mathcal{K}} \left( q_k +  \lambda + \mu q_k^2 \right)
    a_k.
  \end{align*}
  Hence,  a necessary  condition  for $\mathbf{a}$  to be  a
  (possibly  local)  extrema  of  the  figure  of  merit  in
  Eq.~\eqref{eq:linear} over its constraint is that
  \begin{align}
    \label{eq:lagrange}
    \frac{\partial}{\partial a_k} \mathcal{L}(\mathcal{K}) =
    0, \qquad \forall k,
  \end{align}
  for at least one set $\mathcal{K} \subseteq [0, n-1]$.  By
  explicit computation one has
  \begin{align*}
    \frac{\partial}{\partial    a_k}   \mathcal{L}    \left(
    \mathcal{K} \right) = \begin{cases} \left( q_k + \lambda
      +  \mu  q_k^2 \right),  &  \quad  \textrm{ if  $k  \in
        \mathcal{K}$,}\\     0,     &     \quad     \textrm{
        otherwise.}\end{cases}
  \end{align*}
  Since   all  $q_k$'s   are   different,   the  system   in
  Eq.~\eqref{eq:lagrange} contains  $|\mathcal{K}|$ linearly
  independent  equations in  variables $\lambda$  and $\mu$,
  hence  such  a system  admits  solutions  if and  only  if
  $|\mathcal{K}| \le 2$. Hence the statement follows.
\end{proof}

\begin{lmm}
  The  optimization  problem  in  Eq.~\eqref{eq:scalar0}  is
  solved by $k_0^* = 0$, $k_1^* = n - 1$, and
  \begin{align*}
    a^* = \frac{p_{k_1^*}}{p_{k_0^*} + p_{k_1^*}}.
  \end{align*}
\end{lmm}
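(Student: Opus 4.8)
The plan is to eliminate the auxiliary variable $s$ using the equality constraint, thereby reducing the problem to a one-dimensional minimization over $a$ for each fixed choice of the pair of indices, and then to a discrete minimization over that pair. First I would solve the constraint $s^2\left(p_{k_0}^2 a + p_{k_1}^2(1-a)\right) = 1$ for its positive root and substitute it into the figure of merit, so that the objective becomes
\begin{align*}
  f(a) = \frac{p_{k_0}\, a + p_{k_1}(1-a)}{\sqrt{p_{k_0}^2\, a + p_{k_1}^2(1-a)}},
\end{align*}
to be minimized over $a \in [0,1]$ and over the indices $k_0, k_1$. Setting $x := p_{k_0}$ and $y := p_{k_1}$, the numerator $N(a) = y + (x-y)a$ and the squared denominator $D(a) = y^2 + (x^2-y^2)a$ are both affine in $a$, which is what makes the inner problem tractable.

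Next I would minimize over $a$ for fixed $x \neq y$ by differentiating $f^2 = N^2/D$. A short factorization gives $\frac{d}{da}\!\left(N^2/D\right) = N(x-y)^2\left[(x+y)a - y\right]/D^2$; since $N > 0$ and $D > 0$ on the simplex, the unique stationary point is $a^* = y/(x+y)$, and the sign change of the derivative from negative to positive identifies it as a minimum. Evaluating there gives $N(a^*) = 2xy/(x+y)$ and $D(a^*) = xy$, hence $f(a^*) = 2\sqrt{xy}/(x+y)$. Comparing with the boundary values $f(0)=f(1)=1$ confirms that this interior point is the global minimizer over $a\in[0,1]$; the degenerate choice $x=y$ yields the constant value $1$ and may be discarded.

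It then remains to minimize $2\sqrt{xy}/(x+y)$ over the admissible pairs $\{x,y\}=\{p_{k_0},p_{k_1}\}$. Since this quantity depends only on the ratio $r := x/y$ through $\phi(r) = 2\sqrt{r}/(1+r)$, and $\phi'(r) = (1-r)/\!\left[\sqrt{r}\,(1+r)^2\right]$ shows that $\phi$ is strictly increasing for $r<1$ and strictly decreasing for $r>1$, the minimum is attained when the ratio is as far from unity as possible. Under the standing assumption $p_0 < p_1 < \dots < p_{n-1}$, this is the extremal pair $\{p_0,p_{n-1}\}$, giving $k_0^* = 0$, $k_1^* = n-1$, $a^* = p_{n-1}/(p_0+p_{n-1})$, and optimal value $2\sqrt{p_0 p_{n-1}}/(p_0+p_{n-1})$, in agreement with the claim and with the parent theorem.

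I expect the only delicate point to be the discrete optimization in the final step: one must argue cleanly that, among all pairs drawn from the strictly ordered Schmidt coefficients, the extremal pair minimizes $\phi$. This follows from the monotonicity of $\phi$ in $\max(r,1/r)$, but it deserves an explicit statement rather than being left implicit. The inner minimization over $a$, by contrast, is routine once the derivative is factored as above.
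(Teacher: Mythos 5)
Your proof is correct and follows essentially the same route as the paper's: both use the equality constraint to eliminate one of the two continuous variables, solve the resulting one-dimensional problem by elementary calculus to arrive at the pairwise value $2\sqrt{p_{k_0}p_{k_1}}/(p_{k_0}+p_{k_1})$, and finish with the identical ratio-monotonicity argument in $r = p_{k_0}/p_{k_1}$ (the paper restricts to $r<1$ with $k_0<k_1$; you invoke the symmetry $\phi(r)=\phi(1/r)$). The only difference is cosmetic: the paper eliminates $a$ via the constraint and minimizes the convex function $\frac{1}{p_{k_0}+p_{k_1}}\left(\frac{1}{s} + s\,p_{k_0}p_{k_1}\right)$ over $s>0$, whereas you eliminate $s$ and minimize over $a\in[0,1]$, additionally checking the boundary values $f(0)=f(1)=1$.
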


\begin{proof}
  Let us first solve the problem in $a$ for any given $k_0$,
  $k_1$, and $s$. Form the constraint, by direct computation
  one has
  \begin{align}
    \label{eq:astar}
    a^* =  \frac{1 -  s^2 p_{k_1}^2}{s^2 \left(  p_{k_0}^2 -
      p_{k_1}^2 \right)}.
  \end{align}

  Let us  now solve the problem  in $s$ for any  given $k_0$
  and  $k_1$.   By  substituting  Eq.~\eqref{eq:astar}  into
  Eq.~\eqref{eq:scalar0} one immediately has
  \begin{align}
    \label{eq:scalar1}
    \left(  k_0^*, k_1^*,  s^* \right)  := \argmin_{k_0,  \;
      k_1, \; s >  0} \frac1{p_{k_0}+p_{k_1}} \left( \frac1s
    + s p_{k_0} p_{k_1} \right).
  \end{align}
  By explicit computation,  the figure of merit  is a convex
  function in $s$ and is thus minimized in $s$ by taking the
  zero  of   its  first   derivative.  Hence,   by  explicit
  computation one has
  \begin{align}
    \label{eq:sstar}
    s^* = \frac1{\sqrt{p_{k_0} p_{k_1}}}.
  \end{align}

  Let us finally solve the problem in $k_0$ and $k_1$.  Upon
  replacing Eq.~\eqref{eq:sstar} into Eq.~\eqref{eq:scalar1}
  one has
  \begin{align}
    \label{eq:scalar2}
    \left( k_0^*,  k_1^* \right) := \argmin_{k_0,  \; k_1} 2
    \frac{\sqrt{p_{k_0} p_{k_1}}}{p_{k_0} + p_{k_1}}.
  \end{align}
  Without loss of  generality, let us take $k_0  < k_1$, and
  hence  $p_{k_0} <  p_{k_1}$.   Since $k_0$  and $k_1$  are
  discrete variables, one cannot directly apply optimization
  techniques  based on  differential  methods.  However,  it
  follows by  direct computation  that, upon defining  $r :=
  p_{k_0}   /    p_{k_1}$,   the   figure   of    merit   in
  Eq.~\eqref{eq:scalar2} can be written as
  \begin{align}
    \label{eq:discrete}
    \frac{\sqrt{p_{k_0}   p_{k_1}}}{p_{k_0}  +   p_{k_1}}  =
    \frac{\sqrt{r}}{r + 1}.
  \end{align}
  Hence,  the  figure  of  merit  in  Eq.~\eqref{eq:scalar2}
  depends  on $p_{k_0}$  and  $p_{k_1}$  only through  their
  ratio $r$.

  We can now apply differential methods to variable $r$.  By
  explicit computation, one has that the first derivative of
  Eq.~\eqref{eq:discrete} in $r$ is positive in the range $0
  \le  r <  1$  that  we are  considering  since $p_{k_0}  <
  p_{k_1}$.     Hence,    the    figure    of    merit    in
  Eq.~\eqref{eq:scalar2} is monotonically increasing in $r$,
  and is thus minimized by the minimal $r$. By definition of
  $r$, this is achieved by $k_0^* = 0$ and $k_1^* = n-1$.
\end{proof}
\end{appendix}


\begin{thebibliography}{}

\bibitem{hardy} L. Hardy, \textit{Nonlocality for two particles without inequalities for almost all entangled states}, Phys. Rev. Lett. \textbf{71}, 1665 (1993)

\bibitem{FRPaper}
D. Frauchiger, R. Renner,
\textit{Quantum theory cannot consistently describe the use of itself}, Nature Communications \textbf{9}, 3711 (2018). 

\bibitem{aaronson} S. Aaronson, \textit{It's hard to think when someone Hadamards your brain}, 25 Sept 2018 \href{https://www.scottaaronson.com/blog/?p=3975}{(https://www.scottaaronson.com/blog/?p=3975)}

\bibitem{Sch74} S.   Schaible, \textit{Parameter-free Convex
  Equivalent  and Dual  Programs  of Fractional  Programming
  Problems},    Zeitschrift   f\"ur    Operations   Research
  \textbf{18}, 187 (1974).
\bibitem{Sch83}     S.      Schaible,     \textit{Fractional
  Programming},   Zeitschrift   f\"ur  Operations   Research
  \textbf{27}, 39 (1983).
\end{thebibliography}
\end{document}